\title{A Study on Splay Trees\thanks{ This work was supported by
    national funds through Funda\c{c}\~{a}o para a Ci\^{e}ncia e a
    Tecnologia (FCT) with reference UID/CEC/50021/2013 and DataStorm
    EXCL/EEI-ESS/0257/2012 and European Union's Horizon 2020 research and
innovation programme under the Marie Sk{\l}odowska-Curie Actions grant
agreement No 690941.}}
\author{Lu\'{i}s~M.~S.~Russo\thanks{INESC-ID and the Department of
    Computer Science and Engineering, Instituto Superior T\'{e}cnico,
    Universidade de Lisboa (\email{luis.russo@tecnico.ulisboa.pt}).}}
\begin{document}
\maketitle
\slugger{sicomp}{xxxx}{xx}{x}{x--x}

\begin{abstract}
%
  We study the dynamic optimality conjecture, which predicts that
  splay trees are a form of universally efficient binary search tree, for any
  access sequence. We reduce this claim to a regular access bound,
  which seems plausible and might be easier to prove.
  This approach may be useful to establish dynamic
  optimality\footnote{This draft version,
as it exists immediately prior to editing and production by the
Publisher will be posted on the noncommercial pre-print server
arXiv.org, in accordance with SIAM's Consent to Publish.}.
\end{abstract}

\begin{keywords}
Data Structures, Binary Search Trees, Splay Trees, Dynamic Optimality,
Competitive Analysis, Amortized Analysis
\end{keywords}

\begin{AMS}
68P05, 68P10, 05C05, 94A17, 68Q25, 68P20, 68W27, 68W40, 68Q25
\end{AMS}

\pagestyle{myheadings}
\thispagestyle{plain}
\markboth{TEX PRODUCTION}{USING SIAM'S \LaTeX\ MACROS}

\section{Introduction} 
\label{sec:introduction}
\hfill

Binary search trees (BSTs) are ubiquitous in computer science. Their
relevance is well established, both in theory and in
practise. Figure~\ref{fig:intro} illustrates this data structure. The
numbers in the nodes represent the keys. If read left to right, the
keys form an ordered sequence, shown bellow the tree. This
property can be used to efficiently determine if a given number exists
in the tree. If we want to check if the tree contains the number
$0.55$ we can start by comparing this value with the one at the
root, $0.56$. Since the number we are searching for is larger than
$0.56$, the search continues in the left sub-tree, i.e., it 
proceeds to the node containing $0.40$, thus discarding all the
sub-tree to the right. It is the process of eliminating a large
portion of its search space that makes BSTs efficient.
\begin{figure}[h]
  \centering
  \input{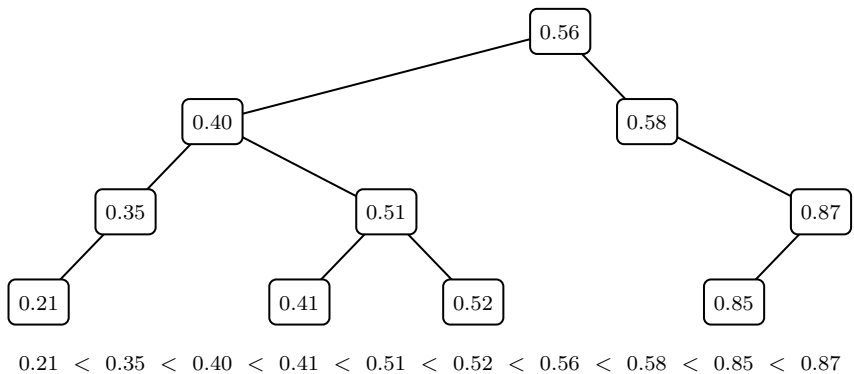}
  \caption{Binary search tree containing keys from $[0,1]$.}
  \label{fig:intro}
\end{figure}

As the search goes through the tree it will eventually reach the
node containing $0.52$, at which point it becomes apparent that
 $0.55$ is not present in the structure. In fact this search
 concludes that no number in the open interval $]0.52, 0.56[$
exists in the tree. This is the only valid conclusion. Even
though the search passed through nodes $0.40$ and $0.51$ it would not
be valid to conclude that no number in $]0.40,0.51[$ exists, because
$0.41$ is on the tree. Hence whenever a search fails we still obtain
information from the data structure. In particular we obtained the
numbers in the tree that are the successor ($0.56$) and the
predecessor ($0.52$) of $0.55$.

Consider how the shape of the tree affects the performance of the
queries. If we are computing a single, isolated, query, we
may want to guarantee that, at each node, the size of the left and
right sub-trees is roughly the same. Note that we never know which
sub-tree a given search will choose. If both sub-trees have the same
size at least half the search space is discarded at each
step. Alternatively we could impose that the tree does not contain
long branches, since a search might have to traverse all of it.

Either way, some policy must be used to main a tree shape that reduces
the search time. Determining ``good'' tree shapes is a complex task,
specially when we consider a sequence of queries, instead of a single
query. Moreover, must of the time, this sequence is not known a
priori.  If a sequence of a couple million queries searches for $0.21$
in half of them and only once for $0.85$, then it might be better to
keep $0.21$ at the root and leave $0.85$ on the longest existing
branch. We would expect such a shape to reduce the overall
time. Hence a balanced tree is important if we need to answer several
essentially distinct queries. On the other hand if the queries are
strongly biased towards a specific region of the keys maintaining
certain branches shorter than the rest might be necessary.

Several strategies are known to maintain efficient BST shapes, see
Knuth~\cite{Knuth:1998:ACP:280635} for an introduction to the
subject. In this paper we focus on the approach used by splay
trees~\cite{Sleator:1985:SBS:3828.3835}. Our goal is to show that this
approach is optimal, in the sense that no other strategy can be
asymptotically faster than splay trees. Notice that splay trees
dynamically alter their structure as they process queries. Hence we
assume that any other BST can do the same.  Figure~\ref{fig:introSp}
illustrates the tree that results from accessing $0.21$ on a splay
tree.
\begin{figure}[tb]
  \centering
  \input{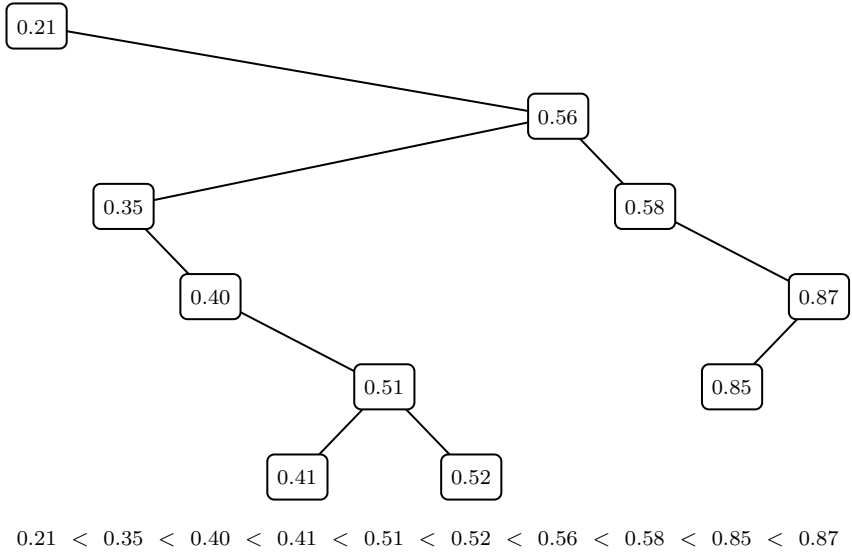}
  \caption{After splaying $0.21$.}
  \label{fig:introSp}
\end{figure}

Our main contribution is the following:
\begin{itemize}
\item We, almost, show that splay trees are dynamically optimal. This
  means that no other BST is asymptotically faster than splay trees,
  no matter what is the structure of the query sequence or the
  re-shaping policy it uses. This property was
  conjectured to be true, over 30 years
  ago~\cite{Sleator:1985:SBS:3828.3835}.

  Our proof depends on a regular access property, which we deem
  plausible and conjecture to be true, see
  Section~\ref{sec:analysis-overview}. At this time we do not have a
  proof of this property.
\end{itemize}
\section{The Problem}  
\label{sec:problem}
\hfill

The introduction exemplified how a BST can be used to maintain a
finite set $K$ of, keys, elements of a universe set $X$. In our
example $X$ is $\mathbb{Q}$. Our goal is to maintain a data structure
such that given a fraction $x$ from $\mathbb{Q}$ we can determine if
$x$ is in $K$, or not. We refer to these operations as queries. A
query is successful it $x$ does belong to $K$ and unsuccessful
otherwise.

Several efficient data structures exist for this problem, depending on
which resources are critical and on which extra operations are
necessary. For our example we also want map behaviour. This means that
the elements of $K$ contain information. Whenever $x$ does exist in
$K$ we also want to access the associated information. The two major
classes of data structures that can be used in this scenario are trees
and hashes. The set $\mathbb{Q}$ is, in general, referred to as the key
universe and the associated information is referred  to as the value
set. In this paper we omit this latter set.

Contrary to hashes, BSTs rely on the order relation among the elements
of the universe. In $\mathbb{Q}$ we have for example that $(1/4) <
(1/2)$. Hence BSTs can efficiently determine successors and
predecessors. The successor of $x$ in $K$ is $\min\{y \in C | x <
y\}$. To simplify the analysis we do not consider unsuccessful
queries, i.e., we assume that all the queries find elements in $K$. An
unsuccessful query for $x$ can be modelled by two successful ones,
one for the successor and one for the predecessor. In general BSTs
support inserting and removing elements from $K$, but we also
do not study those operations.

The re-structuring policy of splay trees consists in moving the
accessed nodes, and the nodes in the respective path, upwards towards
the root. The precise operations are shown in
Figures~\ref{F:zig},~\ref{F:zigzig}~and~\ref{F:zigzag}, where the node
containing $x$ is the one being accessed. The configuration before the
access is represented on the left and the structure after the access
on the right. Accesses when $x$ is on the right sub-tree are obtained by
symmetry.
\begin{figure}[tb]
\begin{center}
\input{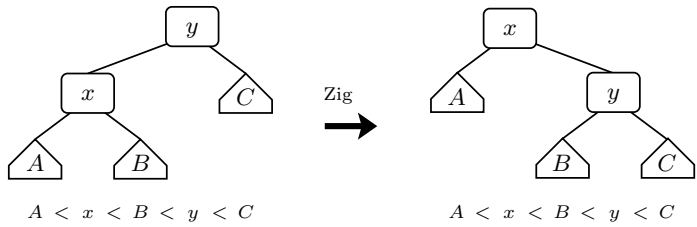}
\caption{Zig operation.}
\label{F:zig}
\end{center}
\end{figure}
\begin{figure}[htbp]
\begin{center}
\input{zigzig-R.gte}
\caption{Zig Zig operation.}
\label{F:zigzig}
\end{center}
\end{figure}
\begin{figure}[htbp]
\begin{center}
\input{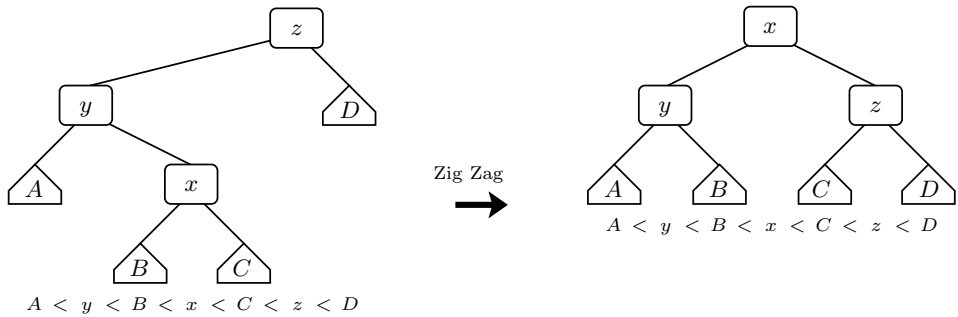}
\caption{Zig Zag operation.}
\label{F:zigzag}
\end{center}
\end{figure}

To compare splay trees with any other BST we assume that the other BST
works in the following way. Besides the tree itself there is a cursor
that moves between nodes. An algorithm on a BST may perform any of
the following operations:
\begin{description}
\item[Compare] the key at the cursor with the current
  search value.
\item[Move] the cursor to an adjacent node, left child, right child
  or parent.
\item[Rotate] the node upwards. The Zig operation is a rotation of
  node $x$, (Figure~\ref{F:zig}).
\end{description}
To perform a search in a BST the cursor starts at the root and
performs a sequence of the previous operations. The result of the
search is the node for which a comparison with $x$ was equal. This
node may, or may not, be the last one in the sequence. At the end of a
search the cursor must return to the root, those moves are accounted
for.

In this model we allow for constant extra information at each node,
such as colour for red-black trees or height for AVL trees. The
intention is to forbid dynamic memory blocks, which
could be used to implement hashing. Hence only a fixed constant amount
of information can be added to a node. Other BST models exist, that
 can be shown to be within a constant factor of the one we
consider~\cite{DBLP:journals/siamcomp/Wilber89}.

Among all BSTs we focus on the one that achieves
the best performance for a query sequence. This means choosing the
tree after knowing the complete query sequence, in other words,
offline. We refer to this optimal tree as $T$.

We do not count compare operations and therefore assume that they cost
0. For the optimal tree we count the number of moves and rotations,
but for the splay tree we only count the number of moves. This avoids
having to carry around a factor of 2 or 3, which would result from
counting rotations and or comparisons.

\section{Analysis Overview}     
\label{sec:analysis-overview}
\hfill

In this section we overview the main techniques in the analysis and
describe our new potential function.
\begin{description}
\item[Amortized analysis.] The analysis is amortized, meaning that the
  time a splay operation takes is not accounted per si, but
  considering preceding and succeeding
  operations~\cite{doi:10.1137/0606031}. Therefore the amount of time
  an operation requires can be partially shifted to some other
  operation in the sequence. The resulting time is known as the
  amortized time. The most common way to transform the total sum into
  an equivalent telescopic sum is to associate a ``potential'' value
  to each state of the data structure $D$, the value $\Phi(D)$. The
  amortized cost of an operation is then defined as $\hat{c} = c +
  \Phi(D') - \Phi(D)$, were $c$ is the actual cost and $\Phi(D')$ the
  potential after the operation is performed. Summing the previous
  equations over all $m$ queries, and using the fact that it is a
  telescopic sum, we obtain a global relation $\sum_{1\leq i \leq m}
  c_i = \left( \sum_{1\leq i \leq m} \hat{c}_i \right) -
  \Phi(D_m)$. In this equation $\Phi(D_m)$ represents the potential
  after the sequence, which might be a negative number. The value
  $\Phi(D_0)$ represents the potential at the beginning and it was
  omitted because it is assumed to be $0$.  This is a classical tool
  in the analysis of splay trees. We present a new potential
  function. This function is chosen so that the structure of the
  optimal tree gets transferred into the splay tree.
\item[Restricted Accesses.] After bounding the amortized cost of
  accesses in $S$ we proceed to amortize the cost of rotations in
  $T$. Such rotations alter $\Phi(D)$ and therefore most be accounted
  for. To obtain a constant bound on this variation we introduce
  extra, organizing splays, in $S$. The amortized time to compute
  these extra splays will depend on the depth of the node we are
  rotating in $T$. Therefore if $T$ performs two rotations in a row we
  need to count the node depth twice. To obtain a valid bound we 
  force $T$ to move its cursor back to the root after performing a
  rotation. We also further impose that $T$ may not access nodes of
  depth $3$ or more. Although these restrictions seem harsh we show
  that, with a linear slowdown factor, it is possible to force the
  optimal sequence to respect them, Lemma~\ref{lem:simulateT}.
\item[Organizing Splays.] Splay trees do not know the full sequence of
  queries in advance, fortunately the analysis does. This means that
  in the analysis it is possible to verify if the structure of the
  splay tree is adequate for the upcoming queries. We optimize the
  tree structure by adding extra organizing splays. These splays
  introduce the logic dependency on a regular access.
\end{description}
The regular access statement is the following:
\newtheorem{conjecture}[theorem]{Conjecture}
\begin{conjecture}[Regular Access]
\label{C:regular}
Let $c_1 + \ldots + c_{m}$ be the total cost of a sequence of $m$
splay operations and $c'_1 + \ldots + c'_{m+e}$ the cost of performing
the same sequence of splay operations, augmented with $e$ extra splays
anywhere. Then $c_1 + \ldots + c_{m} = O(c'_1 + \ldots + c'_{m+e})$.
\end{conjecture}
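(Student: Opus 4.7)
The plan is to proceed by induction on the number $e$ of extra splays, isolating the effect of a single additional operation. The base case $e=0$ holds trivially with hidden constant $1$. For the inductive step it suffices to establish the one-splay version: if $\sigma$ is a splay sequence and $\sigma_y$ is obtained from $\sigma$ by inserting a single extra splay of some node $y$ at some position, then the total cost of $\sigma$ is at most a fixed constant times the cost of $\sigma_y$. Iterating this inequality $e$ times gives Conjecture~\ref{C:regular}.

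The two sequences agree on the operations preceding the insertion, so they arrive at a common tree $T_i$. After position $i$ the execution of $\sigma$ continues from $T_i$, while $\sigma_y$ continues from $T_i'$, where $T_i'$ is obtained from $T_i$ by splaying $y$. The heart of the argument is to bound the cost of executing $x_{i+1},\ldots,x_m$ starting from $T_i$ against the cost of splaying $y$ plus the cost of executing $x_{i+1},\ldots,x_m$ starting from $T_i'$.

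My intended tool is the classical Sleator--Tarjan potential $\Phi(T)=\sum_v \log s(v)$ together with the access lemma, which rewrites the total cost of any splay sequence as a sum of amortized contributions, each at most $3\log n + 1$, plus the net change in $\Phi$. Applied to both $\sigma$ and $\sigma_y$, these expansions agree on operations $1,\ldots,i$ and differ on the remainder only by (a) the amortized cost of the extra splay of $y$, which the augmented sequence already pays, and (b) the gap between the two final potentials. One would then try to show that (b) is accounted for by (a), up to a constant factor, or else absorbed by the slack in the amortized bounds of the remaining operations.

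The main obstacle is part (b). A splay of a deep $y$ can drastically reshape the tree, potentially shortening the access paths of many subsequent queries; although the amortized cost of that one splay is only $O(\log n)$, the telescoped potential savings could a priori be distributed over the remaining $m-i$ operations in a way that makes them collectively much cheaper than their original counterparts, and this saving might exceed the cost of the extra splay by a non-constant factor. Ruling this out seems to require a structural coupling between the two executions -- for instance, matching the rotations performed inside the extra splay of $y$ to rotations that $\sigma$ would later perform on overlapping subtrees -- with a constant-factor bound independent of $n$, $m$ and the identity of $y$. Establishing such a coupling is precisely the step I do not see how to complete, which is consistent with the statement being recorded here as a conjecture rather than a theorem.
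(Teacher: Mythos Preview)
The paper records this statement as an open conjecture and explicitly says ``At this time we do not have a proof of this property,'' so there is no proof in the paper to compare your attempt against. Your closing remark already anticipates this.

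That said, your proposed reduction has a flaw that is independent of the coupling difficulty you flag at the end. You claim that it suffices to prove the one-splay version with some fixed constant $c$, and then ``iterating this inequality $e$ times gives Conjecture~\ref{C:regular}.'' But if each insertion of an extra splay is only guaranteed to preserve cost up to a multiplicative factor $c>1$, then after $e$ insertions you obtain
\[
c_1+\cdots+c_m \;\le\; c^{\,e}\,(c'_1+\cdots+c'_{m+e}),
\]
and $c^{\,e}$ is not $O(1)$ uniformly in $e$. The conjecture requires a constant that does not depend on $e$ (indeed, in the application in Theorem~\ref{teo:optimal} one has $e\le 3R'$, which can be arbitrarily large). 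So the induction-on-$e$ scheme only works if you can take $c=1$ in the one-splay version, i.e.\ show that inserting one extra splay \emph{never} decreases the total cost. That is a much stronger statement than the ``fixed constant'' version you propose, and it is exactly the monotonicity the conjecture is asserting in the first place; the reduction buys you nothing.

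Your diagnosis of the core obstacle---that the potential drop caused by one extra splay can be harvested by many later accesses, and that bounding this requires some structural coupling between the two executions with a constant independent of $n$, $m$, $y$---is accurate and matches why the paper leaves the statement open.
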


This conjecture basically states that computing more splays takes
longer. We assumed this property in our reasoning, until it became
evident that it was not trivial to establish. Our proof relies
precisely on this property because the extra splays are used to
optimize the tree for future accesses.

We will now define the potential function $\Phi$ and give an overview
of the results that we prove in Section~\ref{sec:details}. $T$ denotes
the optimal tree. To every node $v$ of $T$ we assign a weight $w_T(v)
= 1/4^{d_T(v)}$, where $d_T(v)$ is the depth of $v$, i.e., the
distance to the root. The root itself has depth $0$, its children have
depth $1$ and so on. For each node we add up all the weights of its
sub-tree, including $w_T(v)$ itself, the resulting sum is denoted
$s_T(v)$. The rank $r_T(v)$ is computed as $\log(s_T(v))$. The
potential of tree $T$ is given by the following sum:
\[
P(T) = \sum_{v \in T} r_T(v)
\]

Let us consider the trees in Figure~\ref{F:examplePhi}. We have that
$w_T(a) = w_T(c) = 1/16$, $w_T(b) = w_T(e) = 1/4$ and $w_T(d) =1$.
The resulting $r$ values are $r_T(a) = r_T(c) =\log(1/4^2) = -4$,
$r_T(e) = -2$, $r_T(b) = \log (1/4+1/8) = \log (3/8)$ and $r_T(d) =
\log (1 + 1/2 + 1/8) = \log (13/8)$. Hence the total potential of $T$ is
$P(T) = r_T(b) + r_T(d) -10$.

We use $S$ to denote the splay tree. There is a one to one relation
between the nodes of these trees, because both trees share the same
key values. For any node $v$, of $T$, its correspondent in $S$ is
$f(v)$, by correspondent we mean that $v$ and $f(v)$ store the same
key value.  This correspondence is used to transfer the weight values
from $T$ to $S$, more precisely set $w(f(v)) = w_T(v)$. We omit the
$S$ subscript to simplify notation. The values of $w$, $s$ and $r$
that do not have a subscript refer to $S$. Moreover we also omit the
function $f$ and assume instead that $v$ in $S$ means $f(v)$. Hence
the previous relation will be written as $w(v) = w_T(v)$. The values
$s(v)$ and $P(S)$ are computed as before. In general $s(v)$ and
$s_T(v)$ are not, necessarily, equal. See
Figure~\ref{F:examplePhi}. Likewise $P(T)$ and $P(S)$ are also not,
necessarily, equal. In fact we use $\Phi(D) = P(S) - P(T)$ as our
potential function.

 In our example we have that $r(a) = r_T(a)$, $r(c) =
r_T(c)$, $r(e) = r_T(e)$, $r(b)= r_T(d)$ and $r(d) = \log(1 + 1/4 +
1/16) = \log (21/16)$. Therefore 
$\Phi(D) = \log(21/16) - \log(3/8) = \log(7/2)$.
\begin{figure}[htbp]
\begin{center}
\input{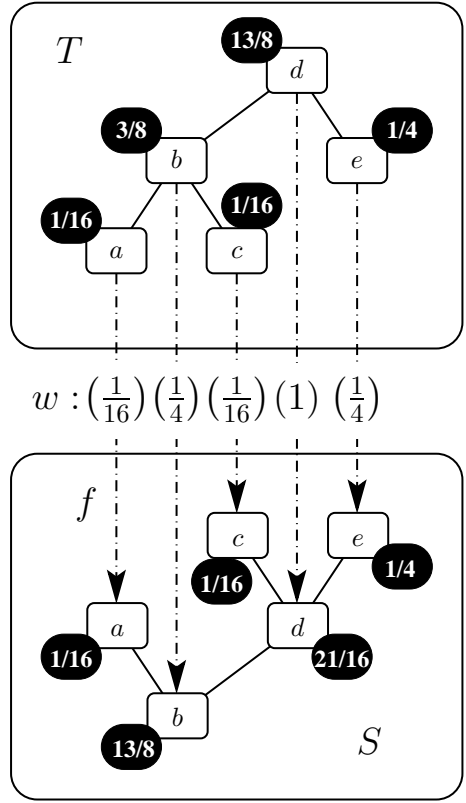}
\caption{Optimal tree $T$ on top, splay tree $T$ in the bottom, upside down.
Example of computing $\Phi$. The $f$ function is illustrated
  by the dashed arrows.  The weight values $w$ are show in between the
  two trees. Inside the black rectangles we show the $s$ values
  associated with the nodes. }
\label{F:examplePhi}
\end{center}
\end{figure}

In the analysis we assume that $S$ and $T$ take turns. First $S$
searches for $x$ and $T$ remains idle. Then $T$ searches for the same
$x$, while $S$ remains idle. Hence whenever the structure of $S$ changes
$P(S)$ changes, but $P(T)$ remains constant. However when the
structure of $T$ changes, both $P(T)$ and $P(S)$ change.

The structure of our argument is as follows:

\begin{itemize}
\item We show that $-n < \Phi(D_m)$, Lemma~\ref{lem:bound-phi}, where
  $n$ is the number of keys.
\item We show that the amortized cost
to splay a node $v$, in $S$, is at most $O(1 + d_T(v))$,
Lemma~\ref{sec:amortized-costs-s}.
\item  We show that whenever there is a
rotation at node $v$ in $T$ we have $\Delta \Phi = O(1)$, by splaying
at most 3 nodes, whose depth in $T$ is at most $d_T(v)$,
Lemma~\ref{lem:powerDelta}.
\item  We combine these results with a restricted
optimal sequence, Lemma~\ref{lem:simulateT}, and obtain an
optimality result, Theorem~\ref{teo:optimal}.
\end{itemize}
Thus we establish that if the optimal tree, with $n$
nodes, needs $R$ rotations and $M$ moves to process a given query
sequence then splay trees require $O(n + R + M)$ time, provided they
have the regular access property.
\section{The Details}  
\label{sec:details}
\hfill

\begin{figure}
  \centering
  \begin{tabular}{ll}
    $T$ & The optimal binary search tree.  \\
    $T'$ & A restricted version of $T$. \\
    $S$ & The splay tree. \\
    $n$ & Number of nodes in $T$ and $S$ \\
    $m$ & Number of queries in a sequence. \\
    $e$ & Number of added organizing splays. \\
    $R$ & Number of rotations in $T$ \\
    $M$ & Number of cursor moves in $T$ \\
    $R'$, $M'$ & Same as $R$ and $M$, but in $T'$ \\
    $v$, $u$ & Nodes \\
    $t$ & The root node \\
    $d_T(v)$ & The depth of node $v$ in $T$ \\
    $w(v)$ & Weight of node $v$ \\
    $s(v)$ & Sum of the weights, in the sub-tree bellow $v$ \\
    $r(v) = \log(s(v))$ & Rank for node $v$ \\
    $w_T(v)$, $s_T(v)$, $r_T(v)$ & Weight, sum and rank in $T$ \\
    $\Phi(D)$ & Potential value for a given state of $S$ and $T$
  \end{tabular}
  \caption{Symbol table.}
  \label{fig:notation}
\end{figure}
In this section we fill in the details that are summarized in the
previous section. This section is divided into three parts. We start
by studying the properties of our potential function $\Phi$. We then
shift the focus to $T$ and explain how to restrict the general access
sequence to ensure that it respects certain desirable conditions. The
last part presents bounds for the amortized costs of accessing elements
in $S$ and rotating nodes in $T$. We then combine these results to
obtain an optimality result.
\subsection{Properties of $\Phi$}
\label{sec:properties-phi}
\hfill

Let us start by analyzing the potential function $\Phi(D)$. We assume
that the initial configuration of $S$ and $T$ is the same. Therefore
 $P(S)=P(T)$ and  $\Phi(D_0) = 0$.

Given that we are using base $2$ logarithms it would be natural to use
base $(1/2)$ powers for the weights $w$ and $w_T$. Using $(1/4)$
instead causes the amortized access cost to be twice as big. On the
other hand we obtain some handy properties.
\begin{lemma}
\label{lem:boundSV}
Let $v$ be a node. The following bounds hold:
  \begin{align}
    0 &\leq w(v) &&\mbox{ for any $v$ in $T$ or $S$.} \label{eq:basic0} \\
    w(v) &\leq 1 &&\mbox{ for any $v$ in $T$ or $S$.} \label{eq:basic4} \\
    w(v) &\leq s(v) &&\mbox{ for any $v$ in $T$ or $S$.} \label{eq:basic1} \\
    s_T(v) &< 2 \times w_T(v)&& \mbox{ only for $v$ in  $T$.} \label{eq:basic2} \\
    s(v) &< 2 &&\mbox{ for any $v$ in $T$ or $S$ } \label{eq:basic3} \\
    s(t') &= s_T(t) &&\mbox{ for the root $t'$ of $S$ and $t$ of $T$ } \label{eq:basic5}
  \end{align}
\end{lemma}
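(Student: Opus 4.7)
The strategy is to dispatch the six inequalities in the order given, since each later one leans on its predecessors. The only genuine computation is the geometric-series bound \eqref{eq:basic2}; the rest are essentially bookkeeping.

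\textbf{Bounds \eqref{eq:basic0} and \eqref{eq:basic4}.} Both are immediate from the definition $w_T(v)=1/4^{d_T(v)}$ and the transfer rule $w(f(v))=w_T(v)$. Since $d_T(v)\geq 0$ we have $0 < 1/4^{d_T(v)} \leq 1$, and the inequality propagates to $S$ through $f$.

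\textbf{Bound \eqref{eq:basic1}.} By definition $s(v)$ (resp.\ $s_T(v)$) is the sum of the weights of every node in the subtree rooted at $v$, and this subtree contains $v$ itself. Since all individual weights are non-negative by \eqref{eq:basic0}, dropping every term except $w(v)$ gives the claim.

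\textbf{Bound \eqref{eq:basic2}.} This is the only step that uses the shape of $T$, and specifically the fact that $T$ is a binary tree with the canonical depth labelling. Fix $v$ in $T$ and let $d=d_T(v)$. A descendant of $v$ located $k$ levels below $v$ has depth $d+k$ in $T$ and hence weight $1/4^{d+k}$; moreover the number of such descendants is at most $2^k$. Summing over the finitely many levels $k=0,1,\dots,K$ on which $v$ has descendants,
\[
s_T(v) \;=\; \sum_{k=0}^{K} (\text{\#descendants at level } k)\cdot \frac{1}{4^{d+k}}
\;\leq\; \sum_{k=0}^{K} 2^k\cdot \frac{1}{4^{d+k}}
\;=\; w_T(v)\sum_{k=0}^{K}\frac{1}{2^k},
\]
and the finite geometric sum is strictly less than $2$, giving $s_T(v)<2w_T(v)$.

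\textbf{Bound \eqref{eq:basic5} and then \eqref{eq:basic3}.} Both quantities $s(t')$ and $s_T(t)$ equal the total weight $\sum_{v} w_T(v)$, because $f$ is a bijection between the node sets of $T$ and $S$ and it preserves weights by definition; in particular the subtree rooted at $t'$ in $S$ and the subtree rooted at $t$ in $T$ both contain every node. This yields \eqref{eq:basic5}. For \eqref{eq:basic3}, the case $v\in T$ follows by combining \eqref{eq:basic2} with \eqref{eq:basic4}: $s_T(v)<2w_T(v)\leq 2$. For $v\in S$, the subtree of $v$ in $S$ consists of some subset of the nodes, so by non-negativity $s(v)\leq s(t')=s_T(t)<2$, where the last inequality is the case of \eqref{eq:basic2} applied at the root of $T$.

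The only non-trivial step is the geometric argument in \eqref{eq:basic2}, and it is standard; the remaining items are accounting facts that depend solely on the definitions and on the bijection $f$.
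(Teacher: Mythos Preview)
Your proof is correct and follows essentially the same approach as the paper: direct from the definitions for \eqref{eq:basic0}, \eqref{eq:basic4}, \eqref{eq:basic1}, \eqref{eq:basic5}, and the geometric-series count (at most $2^k$ descendants at level $k$, each of weight $1/4^{d+k}$) for \eqref{eq:basic2}. The only minor difference is your derivation of \eqref{eq:basic3} for $S$: the paper repeats the $2^d$-nodes-of-weight-$1/4^d$ counting argument directly in $S$, whereas you more economically bound $s(v)\le s(t')=s_T(t)<2$ via \eqref{eq:basic5} and \eqref{eq:basic2} at the root; both are valid and yield the same conclusion.
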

\begin{proof}
  Inequalities~(\ref{eq:basic0}) and~(\ref{eq:basic4}), follow
  directly from our definition of weights.

Inequality~(\ref{eq:basic1}), follows from the definition of
$s$ and Inequality~(\ref{eq:basic0}).

Inequality~(\ref{eq:basic2}) use a geometric series. For the nodes $v$
in $T$ the value $s_T(v)$ is a sum of powers of $1/4$, which depends
on the topology of tree. If $T$ is a single branch then
$s_T(v) < w_T(v) (1 + (1/4) + (1/4^{2}) + \ldots + (1/4^j) + \ldots) < (4
w_T(v)/3)$. If $T$ is perfectly balanced,
this bound becomes even larger. In that case $s_T(v) < w_T(v)(1 +
2 \times (1/4) + 2^2 \times (1/4^2) + \ldots + 2^j \times (1/4^j) +
\ldots) < 2
\times w_T(v)$. In this case the bound comes from the geometric series
of $(1/2)$. Moreover the remaining cases will also be bounded by the
geometric series of $(1/2)$.

Inequality~(\ref{eq:basic3}) holds for $T$, because of
inequalities~(\ref{eq:basic4}) and~(\ref{eq:basic2}). For $S$ note
that in general $T$ cannot have more than $2^d$ nodes with depth $d$
and consequently $S$ cannot have more that $2^d$ nodes with weight
$(1/4^d)$, hence we obtain the geometric series of $(1/2)$ again.

For equality~(\ref{eq:basic5}) notice that the sums at the root
contain all the weight values. Since these values are
mapped from $T$ to $S$ they are globally the same and therefore so is
their sum.
\end{proof}

In particular from Inequality~(\ref{eq:basic3}) we deduce that $r(t)
\leq 1$, for the root $t$ of $S$.

Using potential functions that can assume negative values implies that
the value of $\Phi(D_m)$ becomes a term in the total time. Hence we
bound this value.
\begin{lemma}
\label{lem:bound-phi}
The bound $-n < \Phi(D)$ holds for any configuration of $S$ and $T$,
where $n$ is the number of keys.
\end{lemma}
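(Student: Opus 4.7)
The plan is to show the stronger pointwise bound that for every node $v$, $r(v) - r_T(v) > -1$, and then sum over the $n$ nodes to obtain $\Phi(D) = P(S) - P(T) = \sum_v (r(v) - r_T(v)) > -n$.

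First I would lower-bound $r(v)$. By Inequality~(\ref{eq:basic1}), $s(v) \geq w(v)$, and since $w(v) = w_T(v)$ by definition of the transferred weights, taking logarithms gives $r(v) \geq \log(w_T(v))$. The point is that this bound uses only the weight of $v$ itself, not of its subtree in $S$, which could be arranged very differently from $T$.

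Next I would upper-bound $r_T(v)$. This is where Inequality~(\ref{eq:basic2}) — the geometric-series bound specific to $T$ — does the work: $s_T(v) < 2 w_T(v)$, hence $r_T(v) = \log(s_T(v)) < \log(w_T(v)) + 1$. Subtracting the two inequalities, we get $r(v) - r_T(v) > \log(w_T(v)) - \log(w_T(v)) - 1 = -1$ for every node $v$.

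Summing over the $n$ nodes yields $\Phi(D) = \sum_v r(v) - \sum_v r_T(v) > -n$, which is the claim. There is no real obstacle here; the only subtle point is recognizing that the $(1/4)^{d_T(v)}$ weighting was chosen precisely so that Inequality~(\ref{eq:basic2}) holds in $T$ regardless of topology, which is what makes the per-node slack exactly $1$.
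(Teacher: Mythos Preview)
Your proof is correct and essentially identical to the paper's. The paper arranges the same two inequalities as separate sum bounds $\sum_{v}\log(w(v)) \leq P(S)$ and $P(T) < \sum_{v}\log(2w_T(v))$ and then cancels, whereas you do the cancellation pointwise before summing; the content is the same.
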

\begin{proof}
The following derivation establishes the result.
\begin{eqnarray}
  P(S) - P(T) & = & \Phi(D) \\
  \sum_{v \in S} \log(w(v)) - P(T) & \leq & P(S) - P(T)\label{eq:2}\\
  \sum_{v \in S} \log(w(v)) - \sum_{v \in T} \log(2w_T(v)) & < & \sum_{v \in S}
  \log(w(v)) - P(T) \label{eq:3}\\
  -\sum_{v \in T} 1 & = & \sum_{v \in S} \log(w(v)) - \sum_{v \in T} \log(2w_T(v)) \label{eq:4} \\
  -n & = &  -\sum_{v \in T} 1
\end{eqnarray}
To obtain Inequality~(\ref{eq:2}) use Inequality~(\ref{eq:basic1}), for
all the terms of the sum in $P(S)$.
For Inequality~(\ref{eq:3}) use Inequality~(\ref{eq:basic2}) and
apply logarithms and change signals.
Equation~(\ref{eq:4}) follows from the same argument as
Equation~(\ref{eq:basic5}) and the fact that $\log(2w_T(v)) = 1 + \log(w_T(v))$.
\end{proof}

\subsection{Restricting $T$}
\label{sec:restricting-t}
\hfill

Competing directly with $T$ is fairly hard, partially because the
sequence of accesses in $T$ is completely free. For our
purposes we need that the accesses in $T$ respect the following
properties:
\begin{itemize}
\item When a node $v$ is visited by the cursor or involved in a
  rotation its depth is less than $3$, i.e., $d_T(v) < 3$.
\item After a rotation the cursor moves back to the root.
\end{itemize}
We refer to a sequence of nodes that respects these conditions
as a restricted sequence. Let us show that given any access sequence of
visited nodes in $T$ it is possible to simulate it in another
tree $T'$, so that the accesses in $T'$ are restricted. By simulation
we mean that the sequence visited by the cursor of $T$ is a
sub-sequence of the one visited by the cursor of $T'$. Naturally the
 simulation will be slower than the original sequence.
\begin{lemma}
\label{lem:simulateT}
  Any sequence of nodes visited by the cursor of $T$, consisting of
  $M$ moves and $R$ rotations, can be simulated by a restricted
  sequence of cursor moves on a tree $T'$ with $4M+3R$
  moves and $2M+R$ rotations.
\end{lemma}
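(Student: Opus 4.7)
The plan is to simulate $T$'s sequence on $T'$ inductively, initialising $T'$ as a copy of $T$ and maintaining the invariant that the key at $T$'s cursor sits at the root of $T'$, while each $T$-neighbour of that key (its $T$-parent and its up to two $T$-children) lies at $T'$-depth at most $2$. Under this invariant every next $T$-operation is within reach of $T'$'s restricted cursor, and I will bound each $T$-move by $4$ moves and $2$ rotations in $T'$, and each $T$-rotation by $3$ moves and $1$ rotation in $T'$, which sums to the claimed totals.

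For a $T$-move from $v$ to a $T$-adjacent node $u$, the invariant puts $u$ at $T'$-depth $1$ or $2$. A depth-$1$ target calls for a single zig ($1$ move, $1$ rotation). A depth-$2$ target in a same-direction (zig-zig) configuration calls for rotating $u$'s $T'$-parent first and then $u$ itself ($2$ moves, $2$ rotations). A depth-$2$ target in a zig-zag configuration calls for two successive single rotations on $u$, interleaved with the mandatory return to the root ($2$ moves down, $1$ rotation, $1$ move back, $1$ move down, $1$ rotation; total $4$ moves and $2$ rotations). A local case analysis of each rotation pattern shows that the ``opposite'' sibling subtrees travel with the rotation so that $u$'s $T$-children land at $T'$-depth $1$ and $2$ and $v$, which is now $u$'s $T$-parent, lands at $T'$-depth $1$ or $2$, re-establishing the invariant for the new cursor.

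For a $T$-rotation at $v$, the cursor and the $T'$-root both stay at $v$, but $v$'s $T$-neighbourhood changes: one $T$-child is replaced by the old $T$-parent and the new $T$-parent is the old $T$-grandparent. If the old grandparent is already at $T'$-depth at most $2$ then nothing is needed in $T'$. Otherwise a single rotation at the depth-$2$ node whose subtree contains the new $T$-parent lifts that new $T$-parent into $T'$-depth $2$, at a cost of $2$ moves down, $1$ rotation and $1$ move back, i.e.\ $3$ moves and $1$ rotation. Summing across $M$ moves and $R$ rotations then yields $4M+3R$ moves and $2M+R$ rotations in $T'$.

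The main obstacle is the structural case analysis that certifies the invariant. For the move case I have to verify the invariant separately for each left/right direction and each zig/zig-zig/zig-zag choice; for the rotation case I must argue that the ``opposite sibling subtree preservation'' property, maintained inductively by every zig, zig-zig and zig-zag performed during the preceding descent, keeps the old grandparent close enough to the root that a single depth-$2$ rotation always suffices to re-align $T'$ with the new $T$-neighbourhood of $v$.
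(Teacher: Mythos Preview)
Your invariant is too weak to be self-maintaining, and the gap shows up already for \emph{upward} moves in $T$. Your move paragraph treats only the downward case: the phrase ``$v$, which is now $u$'s $T$-parent'' assumes $u$ is a $T$-child of $v$. When instead $u$ is the $T$-parent of $v$, the new cursor $u$ has as its $T$-parent the old $T$-grandparent $g$ of $v$, and nothing in your invariant bounds the $T'$-depth of $g$. Concretely, take $T$ to be the tree with root $6$, left child $5$, left child $2$, with $2$ having children $1$ and $3$, and $3$ having right child $4$. Run your simulation on the cursor sequence $6\to 5\to 2\to 3\to 4\to 3\to 2$: each downward step is a depth-$1$ zig or a depth-$2$ zig-zag, and each upward step is a depth-$1$ zig. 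After the final zig at $2$ the tree $T'$ is the right path $2$--$3$--$4$--$5$--$6$ (with $1$ as left child of $2$), so $2$'s $T$-parent $5$ sits at $T'$-depth $3$, violating your invariant. The same issue breaks your rotation argument: you assert that one rotation at depth $2$ ``lifts the new $T$-parent into $T'$-depth $2$'', but that only works if $g$ was at depth exactly $3$, which you never establish.

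The paper avoids this by a structurally different construction. It augments $T'$ with two sentinel keys $\min$ and $\max$ and maintains the much stronger invariant that the entire $T$-ancestor path of the cursor is laid out along the leftmost and rightmost spines of $T'$ (ancestors smaller than the cursor on the left spine, larger ones on the right), with $\min$ and $\max$ as the children of the root. Under this invariant every downward $T$-move is a fixed zig-zag through a sentinel ($4$ moves, $2$ rotations), every upward move is its exact reversal, and a $T$-rotation is realised by a single rotation on the depth-$2$ spine node ($3$ moves, $1$ rotation) because the $T$-grandparent is \emph{guaranteed} to sit at depth $2$ on the spine. Your approach might be salvageable if you strengthen the invariant to something of this spine form, but as stated the ``$T$-neighbours at depth $\le 2$'' invariant, together with the vague ``opposite sibling subtree preservation'' property, does not close the argument.
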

\begin{proof}
\begin{figure}
  \centering
  \input{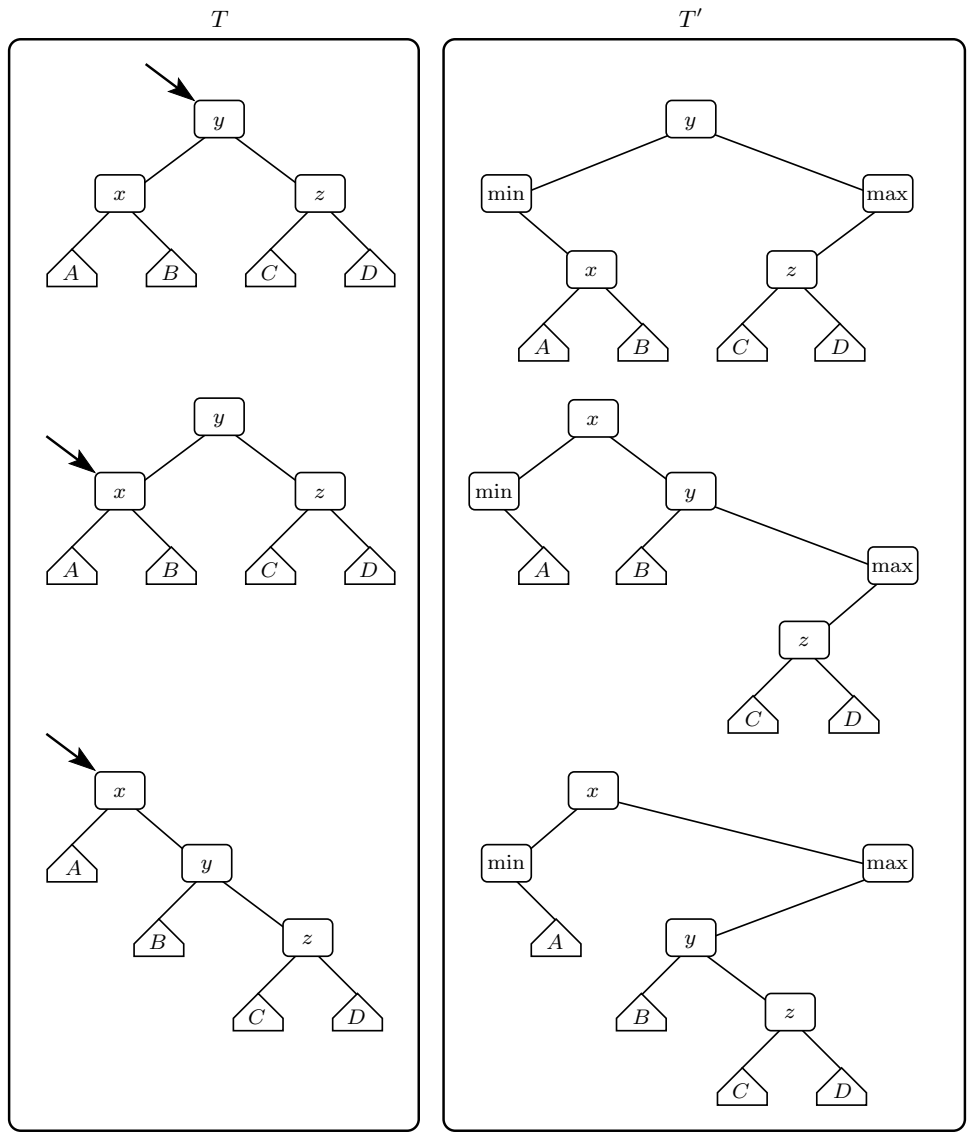}
  \caption{Simulation of $T$ (left) by $T'$ (right). Initial
    configuration on top. The middle configuration represents the
    result of moving the cursor from node $y$ to node $x$. The bottom
    configuration shows the result of performing a rotation on $x$,
    while in the middle configuration.}
  \label{fig:simul}
\end{figure}
Figure~\ref{fig:simul} illustrates the operations that we consider in
this Lemma. Besides the same keys as $T$, the tree $T'$ contains two extra keys,
$\min$ and $\max$, that are, respectively, smaller and larger than all
the other keys in $T$. The simulation works by keeping the node at the
cursor of $T$ in the root of $T'$.

We assume that the initial state of $T$ and $T'$ is almost alike. The
same key is stored a the root. The children of $T'$ are $\min$ and
$\max$. The left sub-tree of $T$ is stored in the right child of
$\min$ and the right sub-tree of $T$ is stored in the left child of
$\max$. Shown at the top in Figure~\ref{fig:simul}.

Whenever the cursor of $T$ moves down the corresponding node is moved
up to the root of $T'$, the process is a ZigZag operation on $T'$, or
ZagZig depending on which grand-child. Note that between the rotations
of the ZigZag operation the cursor must return to the root to comply
with the second condition of restricted sequences. These
movements are underlined in the example. Hence every downward
move in $T$ originates $4$ moves and $2$ rotations in $T'$. This
process transforms the trees in the top of Figure~\ref{fig:simul}, to
the trees in the middle. In this example the sequence of operations
is: \texttt{moveTo($\min$)}, \texttt{moveTo($x$)}, \texttt{rotate()},
\underline{\texttt{moveTo($y$)}, \texttt{moveTo($x$)}}
\texttt{rotate()}.

The cursor may also move upwards on the tree, hence
reversing the previous move. In this case the sequence of operations
would be: \texttt{moveTo($y$)},\texttt{rotate()},
\texttt{moveTo($x$)}, \texttt{moveTo($\min$)}, \texttt{rotate()},
\texttt{moveTo($y$)}. This requires $4$ moves and $2$
rotations. Therefore a move in $T$
originates $4$ moves and $2$ rotations in $T'$.

In this example the initial move was to $y$, because $y$ is the child
in $T'$ that corresponds to he parent in $T$. To determine this
property we could store, in $T'$ the depth of the nodes in $T$. Only
for the nodes in the leftmost and rightmost branches of $T'$,
otherwise updated values would be hard to maintain. However it is not
necessary to do so. Recall that $T'$ is simulating $T$ and $T$ is the
optimal tree, which does not need to consult the keys to know which
move to perform, it only needs to behave as a BST. Therefore $T'$ also
does not need extra information.

Whenever a rotation is performed in $T$ the structure of $T'$ must be
adapted accordingly. Recall that a rotation on node $v$ means
that $v$ is moved upwards. In Figure~\ref{fig:simul} the
transition from the middle to the bottom shows how a
rotation alters the structure of $T'$. In this case the sequence of
moves is: \texttt{moveTo($y$)}, \texttt{moveTo($\max$)},
\texttt{rotate()}, \texttt{moveTo($x$)}. Hence a rotation originates
$3$ moves and $1$ rotation in $T'$. The general procedure is to move
to $y$ because it is the child of $T'$ that corresponds to the parent
of $x$ in $T$, and move again in the same direction, in this case to
$\max$. This node is rotated upwards and the procedure finishes by
returning to the cursor back to the root.

Note that in Figure~\ref{fig:simul} the cursor of $T$ is drawn close
to the root. In general the structure of the upward path from the
cursor of $T$ to the root gets splitted into the leftmost and
rightmost branches of $T'$, but the update procedure is essentially as
explained.
\end{proof}

In the following results we continue to use $T$, instead of $T'$,
which simplifies the analysis. The tree $T'$ is used only in
Theorem~\ref{teo:optimal}.
\subsection{Amortized Costs of $S$ and $T$}
\label{sec:amortized-costs-s}
\hfill

Let us now bound the amortized time of
splaying a node of $S$.
\begin{lemma}
\label{lem:amortizedS}
Splaying a node $v$ takes at most $4 + 6 d_{T}(v)$, amortized time,
where $d_{T}(v)$ is the depth of the corresponding node in $T$.
\end{lemma}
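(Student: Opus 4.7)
The plan is to invoke the classical Sleator--Tarjan access lemma, which states that the amortized cost of splaying a node $v$ is at most $1 + 3(r(t) - r(v))$, where $t$ is the root of $S$ and $r$ is computed from any fixed assignment of positive weights. The key observation is that this lemma does not require the weights to be chosen in any particular way relative to $S$; it only uses the fact that $r(\cdot) = \log(s(\cdot))$ where $s(\cdot)$ sums weights over a subtree. Hence the transferred weights $w(v) = w_T(v) = 1/4^{d_T(v)}$ are a valid choice, even though they are defined via $T$. I would briefly recall the standard potential-difference telescoping that establishes the access lemma (so that the reader can confirm it goes through unchanged with our $\Phi$), but the bulk of the proof is just a substitution.

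Once the access lemma is in hand, the amortized cost of splaying $v$ is at most $1 + 3(r(t) - r(v))$, and it remains to bound the two rank terms. First I would lower-bound $r(v)$: by Inequality~(\ref{eq:basic1}) we have $s(v) \geq w(v) = 1/4^{d_T(v)}$, so
\[
r(v) = \log(s(v)) \geq \log\bigl(1/4^{d_T(v)}\bigr) = -2\, d_T(v).
\]
Next I would upper-bound $r(t)$: by Inequality~(\ref{eq:basic3}) we have $s(t) < 2$, so $r(t) < 1$.

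Combining these two bounds gives $r(t) - r(v) < 1 + 2\, d_T(v)$, and substituting into the access lemma yields
\[
\hat{c} \leq 1 + 3\bigl(r(t) - r(v)\bigr) < 1 + 3 + 6\, d_T(v) = 4 + 6\, d_T(v),
\]
which is the desired bound. The fact that we get the clean coefficient $6$ is precisely the reason the weights were chosen as powers of $1/4$ rather than $1/2$, as remarked in Section~\ref{sec:properties-phi}.

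I expect no serious obstacle: the main point to justify carefully is that the Sleator--Tarjan telescoping argument remains valid when the weights are assigned from the external tree $T$ rather than being functions of $S$. The argument is unaffected because the weights are constants during a splay (only $T$'s rotations can change them, and those are handled separately in Lemma~\ref{lem:powerDelta}), so each zig, zig-zig, and zig-zag bound on the rank difference goes through verbatim. The only quantitative change compared to the classical setting is the factor of $2$ coming from using base-$2$ logarithms with base-$1/4$ weights, which is exactly what produces the coefficient $6$ rather than $3$ in front of $d_T(v)$.
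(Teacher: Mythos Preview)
Your proposal is correct and follows essentially the same route as the paper: invoke the access lemma $\hat{c}\leq 1+3[r(t)-r(v)]$, bound $r(t)<1$ via Inequality~(\ref{eq:basic3}) and $r(v)\geq \log w(v)=-2d_T(v)$ via Inequality~(\ref{eq:basic1}), and simplify. The only difference is cosmetic ordering of the substitutions.
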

\begin{proof}
The following derivation establishes this bound:
\begin{eqnarray}
\hat{c}_{i} & \leq & 1 + 3[r(t) - r(v)] \label{eq:aad1}\\
            & < & 1 + 3[1 - r(v)] \label{eq:aad2}\\
            & = & 4 - 3 r(v) \label{eq:aad3}\\
            & \leq & 4 - 3 \log(w(v)) \label{eq:aad4}\\
            & = & 4 - 3 \log(1/4^{d_{T}(v)}) \label{eq:aad5}\\
            & = & 4 + 6 d_{T}(v) \label{eq:aad6}
\end{eqnarray}
Inequality~(\ref{eq:aad1}) is the classic amortized access
Lemma~\ref{lem:amortizedAccess}, see
Section~\ref{sec:related-work}. Inequality~(\ref{eq:aad2}) follows
from~(\ref{eq:basic3}) and recalling that $r(t) =
\log(s(t))$. Inequality~(\ref{eq:aad4}) results from applying
logarithms and switching signs to~(\ref{eq:basic1}). The remaining
equations use the definition of weight $w$ and simplify the result.
\end{proof}

This Lemma shows that the amortized time to splay a node is slightly
more than $6$ times the time that it is necessary to access the
corresponding node in $T$. Let us focus on the amortized cost of
accessing nodes in $T$. Whenever $T$ accesses a node there is no real
cost for $S$. However if those accesses involve rotations in $T$ then
the value of $\Phi$ changes, which will constitute an amortized cost
that $S$ most account for.
\begin{lemma}
  \label{lem:powerDelta}
  Whenever a node $v$ of $T$, with $d_T(v) < 3$, gets rotated the
  bound $\Delta \Phi \leq 11 + \log (11)$ holds, after, at most, $3$ nodes of $S$
  are splayed. Each splayed node $v^*$ has $d_T(v^*) \leq d_T(v)$.
\end{lemma}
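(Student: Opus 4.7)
I would write $\Delta\Phi=\Delta P(S)-\Delta P(T)$ and track each piece separately: the rotation in $T$ contributes to both sums through weight changes, while the three splays in $S$ affect $\Delta P(S)$ only.

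First, I would analyze the weight changes produced by the rotation of $v$ in $T$. Let $p$ be the parent of $v$, let $w$ be its sibling, and let $A,B$ be its children. The rotation decreases the depth of $v$ and of every node in subtree $A$ by one (weights multiplied by $4$), increases the depth of $p$ and of every node in subtree $w$ by one (weights multiplied by $1/4$), and leaves subtree $B$ and everything outside $p$'s $T$-subtree untouched. Since the $T$-subtree of each affected node keeps its set of keys, $r_T$ increases by $+2$ on every node of subtree $A$, decreases by $-2$ on every node of subtree $w$, is unchanged on subtree $B$, and changes by at most $O(1)$ at $v$, at $p$, and --- because $d_T(v)<3$ --- at the at-most-one ancestor of $p$. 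This gives
\[
 \Delta P(T) \;=\; 2|A|_T - 2|w|_T + O(1),
\]
with the $O(1)$ constants bounded explicitly via Lemma~\ref{lem:boundSV}, using $s_T(u)<2w_T(u)$ and $w_T(u)\ge 1/16$ for all affected nodes.

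Second, I would choose the three splays. The candidates of $T$-depth $\le d_T(v)$ are $v$ itself, its sibling $w$, its parent $p$, and (when $d_T(v)=2$) the grandparent of $v$. I would pick $v$, $w$, $p$ in an order dictated by the rotation type (zig, zig--zig, zig--zag) so that after the three splays, $v$ sits at the root of $S$, with $p$ and $w$ occupying positions that mirror the post-rotation layout of $T$ around $v$. The objective is that the left $S$-subtree of $v$ holds exactly the keys of subtree $A$ and the right $S$-subtree contains subtree $B$, $p$, and subtree $w$ in the analogous arrangement.

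Third, with that layout in place the weight scalings propagate through $S$ as through $T$: inside the left $S$-subtree of $v$ the factor-$4$ scaling adds $+2$ to every rank, contributing $2|A|_T$ to $\Delta P(S)$; symmetrically the factor-$1/4$ inside subtree $w$ contributes $-2|w|_T$. These bulk terms match $\Delta P(T)$ and cancel in $\Delta\Phi$, leaving only bounded residuals at $v$, $p$, the possible ancestor, and the root. Bounding these residuals using $s<2$ (hence $r<1$) and $w(v^*)\ge 1/16$ for the affected nodes yields the claimed $\Delta\Phi\le 11+\log 11$.

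\textbf{Main obstacle.} The delicate step is the alignment claim used in the third step: three splays do not literally force the left $S$-subtree of $v$ to equal subtree $A$ of $T$, because keys outside $p$'s $T$-subtree whose values happen to lie in the relevant ranges can wander in. The cleanest way around this seems to be to replace exact equality by an invariant of the form ``after the three splays, for every $u\in A$ the $S$-subtree of $u$ contains only keys $\le v$, and symmetrically for $w$'', which is enough to make the per-node $\pm 2$ rank shifts telescope as claimed. Verifying this invariant in each of the zig, zig--zig, and zig--zag cases, and then pinning down the residual constants tightly enough to land on $11+\log 11$, is where the technical bulk of the proof will lie.
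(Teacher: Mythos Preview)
Your decomposition $\Delta\Phi=\Delta P(S)-\Delta P(T)$ and the per-node cancellation idea are exactly what the paper uses, so the overall plan is right. The gap is in your choice of the three splayed nodes. You propose $v$, its sibling $w$, and its parent $p$; the paper instead splays $v$, $p$, and (when $d_T(v)=2$) the \emph{grandparent} of $v$. The reason is that the nodes whose keys delimit the subtrees $A$, $B$, $C$ (and, when $d_T(v)=2$, the region outside $p$'s $T$-subtree) are precisely the $T$-ancestors of $v$; the sibling $w$ has a key somewhere in the interior of $C$ and does not cut off any boundary. Splaying $w$ therefore does not separate $C$ from the outside keys.

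This matters in exactly the direction your invariant does not control. Take $d_T(v)=2$ with $v$ the left-left grandchild of the root $g$, so $C$ consists of keys in $(p,g)$ and the outside keys are those larger than $g$. For $u\in C$ whose $S$-subtree still contains outside keys (unchanged weights), one gets $s'(u)/s(u)>1/4$, hence $\Delta r(u)>-2$, while $\Delta r_T(u)=-2$; the contribution $\Delta r(u)-\Delta r_T(u)$ is strictly positive and can be close to $2$. Summed over all of $C$ this blows up, so your bulk terms $-2|w|_T$ on the $S$ side and the $T$ side do \emph{not} match. Your proposed invariant ``$S$-subtree of $u$ contains only keys $\ge p$'' is satisfied here and still allows this failure, so it is not strong enough. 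Replacing the splay of $w$ by a splay of the grandparent restores the encapsulation: after splaying $v$, $p$, $g$ (in that order), every deep node of $C$ in $S$ has $S$-descendants only in $C$, and the per-node contributions are again zero except at the at-most-five boundary nodes $x$, $y$, $g$ and the possibly pulled-up $b\in B$, $c\in C$, which is where the constant $11+\log 11$ comes from.
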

\begin{proof}
Recall that in a rotation the node $v$ moves upwards on the tree.
We consider the cases when $d_T(v) = 1$ and $d_T(v) = 2$.
\begin{figure}[bt]
\begin{center}
\input{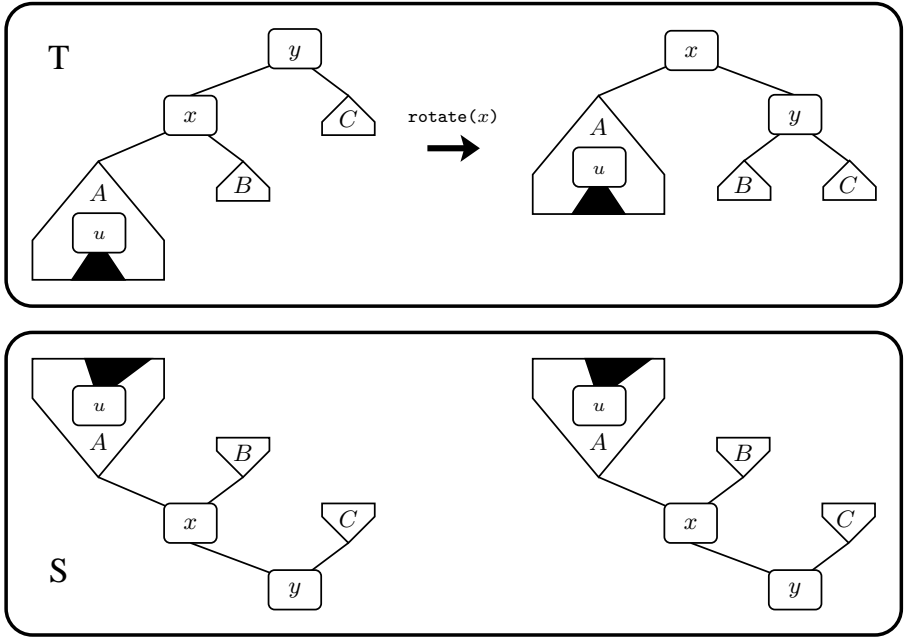}
\caption{Variation of $\Phi$ with rotation on T.}
\label{F:rotP}
\end{center}
\end{figure}

The proof is illustrated in Figure~\ref{F:rotP}, which shows
the situation when $d_T(v) = 1$, but from which we can infer
properties that apply to all cases.

 The optimal tree $T$ is represented on top and the splay tree $S$ in
 the bottom, upside down. To simplify let us assume that these trees
 are only slightly different. Meaning that they share some
 structure. The nodes in sub-trees $A$, $B$ and $C$ are the same, but
 their shape is not necessarily equal. For example the
 descendants of node $u$ are not, necessarily, the same.

A fundamental observation for this proof is that almost all the nodes,
in the trees, contribute $0$ to the value $\Delta \Phi$. The only
nodes that effectively contribute to $\Delta \Phi$ belong to, at
least, one the following categories:
\begin{enumerate}
\item Nodes that contain descendants from more than one of the sets
  $A$, $B$, $C$, either in $S$ or $T$ or both. Examples of these nodes
  are the ones containing $x$ and $y$. Moreover some nodes $a \in A$,
  $b \in B$ and $c \in C$ may appear in $S$ with this property.
\item Nodes for which the set of descendants is altered by the
  rotation in $T$. Only nodes $x$ and $y$.
\end{enumerate}

Hence we are claiming that the nodes in $A$, $B$ and $C$ do not contribute to
  $\Delta \Phi$.  Consider a node $u$, contained in the sub-tree
  $A$. Let $w_T(u)$ be the weight of $u$ before the rotation and
  $w_T'(u)$ be the weight after the rotation. Likewise we consider the
  $s_T$, $s_T'$, $r_T$, $r_T'$ values and the values $s$, $s'$, $r$,
  $r'$, over $S$.

  Let us account for the contribution of $u$ to $\Delta \Phi$, i.e.,
  $r'(u)-r(u)+r_T(u)-r'_T(u)=
  \log([s'(u)/s(u)]\times[s_T(u)/s'_T(u)])$.  Notice that $w'(u)/w(u)=
  4$, because the depth of $u$ decreases in $T$. Likewise
  $s'(u)/s(u)=4$, because the same happens for all the elements in
  $A$. On the other hand $s_T(u)/s'_T(u)= 1/4$ because the order of
  the factors is reversed. Therefore the previous value is $\log (4/4) =
  0$. A similar reasoning holds for the elements in $C$. The nodes in
  $B$ do not suffer potential variations and therefore contribute $0$
  to the overall variation.

  In this scenario the only nodes that contribute to $\Delta \Phi$ are
  the ones that contain descendants from more than one of the $A$,
  $B$, $C$ sets. In our simplification only $x$ and $y$. However in
  the general case the nodes in sub-trees $A$, $B$ and $C$ have
  different relations in $S$ and in $T$. Hence we need to be more
  precise as to what these nodes should be. The nodes in $A$ are the
  descendants of $x$ that are smaller than $x$. The nodes in $B$ are
  the descendants of both $x$ and $y$. The nodes in $y$ are the
  descendants of $y$ that are larger than $y$.

  To force these sets of nodes to be consistent between $S$ and
  $T$ we splay node $x$ and node $y$ in $S$, in this order. Now $A$ and
  $C$ contain the same nodes in $S$ and $T$, but it may happen that
  some ZigZig operation pulls a node $b$ upwards from $B$, so that it
  splits $B$ and is in between the nodes $x$
  and $y$. In this case we must also count the contribution from $b$.

  Using an argument similar to the one above we can guarantee that
  $s'(b)/s(b) \leq 4$ and $s_T(b)/s'_T(b) = 1$ and therefore
  $r'(b)-r(b)+r_T(b)-r'_T(b) \leq 2+0 = 2$.

  For the node containing $x$ we also have that $s'(x)/s(x) \leq 4$,
  but $s_T(x)/s'_T(x)$ is trickier, because $s'_T(x)$ now includes the
  extra terms related to $w_T'(y)$ and $s_T'(C)$, these
  values are non-negative and therefore
  decrease the fraction $s_T(x)/s'_T(x)$. Therefore the bound of $4$
  holds true. Hence for the node
  containing $x$ we count another $4$ units, as
  $r'(x)-r(x)+r_T(x)-r'_T(x) \leq 2+2 = 4$.

  For the node containing $y$ we also have that $s'(y)/s(y)
 \leq 4$ but $s_T(y)/s'_T(y)$ is again trickier. The following
 derivation obtains a bound.
  \begin{eqnarray}
    s_T(y)/s'_T(y) & = &
    \frac{w_T(y)+s_T(C)+s_T(B)+w_T(x)+s_T(A)}{s'_T(y)} \label{eq:6} \\
& = & \frac{w_T(y)+s_T(C)+s_T(B)}{s'_T(y)} +
\frac{w_T(x)+s_T(A)}{s'_T(y)}  \label{eq:7}\\
& \leq & 4 + \frac{w_T(x)+s_T(A)}{s'_T(y)} \label{eq:8} \\
& \leq & 4 + \frac{w_T(x)+s_T(A)}{1/4} \label{eq:9} \\
& = & 4 + \frac{(1/4)+s_T(A)}{1/4} \label{eq:10} \\
& \leq & 4 + \frac{(1/4)+(1/8)}{1/4} \label{eq:11} \\
& \leq & 4 + 1.5 \label{eq:12}     
  \end{eqnarray}

  Equations~(\ref{eq:6}) and~(\ref{eq:7}) are simple
  manipulations. Inequality~(\ref{eq:8}) is our general bound of $4$,
  for the fraction that does not change the
  descendants. Inequality~(\ref{eq:basic1}) yields $1/4 = w'_T(y) \leq
  s'_T(y)$, which we use to establish
  Inequality~(\ref{eq:9}). Inequality~(\ref{eq:11}) follows from
  Inequality~(\ref{eq:basic2}) that yields $s_T(A) \leq 2 \times 1/16 =
  1/8$.  Therefore the total for $y$ is $r'(y)-r(y)+r_T(y)-r'_T(y)
  \leq 2 + \log(5.5) = 1 + \log(11)$.

  Summing up, in this case we splayed two nodes and have to account
  for $b$, $x$ and $y$, therefore
  $\Delta \Phi \leq 2 + 4 + (1 + \log(11)) = 7 + \log(11)$.
  
  Let us consider the case $d_T(v) = 2$. This means that there is a
  node $v_r$ which is an ancestor of $y$, in $T$. Let us assume
  that $v_r$ is to the right of $C$. The case where the node is to the
  left of $A$ is easier. It may happen that
  $v_r$ is inside $C$ in $S$, in that case we splay $v_r$. This
  splay operation may in turn split $C$ by pulling up a node $c$. In
  total we splayed 3 nodes and have $5$ nodes that may contribute to
  $\Delta \Phi$, the nodes $b$ and $c$, that got pulled up, and the
  nodes $x$, $y$, $v_r$.

  For node $c$ we have a bound of $4$, using the argument above, which
  holds  because the descendants of $c$ in $T$ do not
  change. Hence $r'(c)-r(c)+r_T(c)-r'_T(c) \leq 2+2 = 4$.

  Note that $v_r = t$ is the root of both $S$ and $T$. Therefore we
  can can bound its variation by $0$, instead of our pessimistic $4$
  value. For this node we have $r'(t) = r'_T(t)$ and $r(t) = r_T(t)$,
  by Equation~(\ref{eq:basic5}). Hence in this case the total bound is
  $\Delta \Phi \leq 2 + 4 + 4 + (1 + \log(11)) + 0 = 11 + \log (11)$

  Notice that all this splaying may change the ancestor relation
  between $x$ and $y$ in $S$.  This may happen when there is no node
  $b$ and we are pulling a node $v_\ell$, to the left of $A$. This
  relation in $S$ does not change the argument above, as both $x$ and
  $y$ are being pessimistically bounded in $S$. The crucial
  consequence of the extra splays is that the nodes in $A$, $B$ and
  $C$ are properly encapsulated, and the number of nodes whose
  descendants belong to more than one of these sets is limited.
\end{proof}

We can now prove our optimality result. 
\begin{theorem}
\label{teo:optimal}
Consider a sequence of $m$ queries, to a splay tree $S$ with $n+2$ keys,
for which, a similar, optimal tree $T$ uses $R$ rotations and $M$ cursor
movements. Provided that splay trees have regular access, then the
total number of operations performed by $S$ is $O(n + R + M)$.
\end{theorem}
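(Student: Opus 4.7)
The plan is to combine the three main lemmas of Section~\ref{sec:details} in a single amortized accounting and then invoke Conjecture~\ref{C:regular} to discharge the organizing splays. First, I would apply Lemma~\ref{lem:simulateT} to replace $T$ by the restricted tree $T'$ with $n+2$ keys, $M' = 4M+3R$ moves and $R' = 2M+R$ rotations; in $T'$ every accessed or rotated node has depth less than $3$ and the cursor returns to the root after each rotation, and the two sentinel keys $\min$ and $\max$ account for the ``$n+2$'' in the statement.

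Then I would construct an augmented sequence of splays on $S$: for each query, splay the queried node, and for each of the $R'$ rotations in $T'$, insert the (at most three) organizing splays prescribed by Lemma~\ref{lem:powerDelta}. By Lemma~\ref{lem:amortizedS} every such splay has amortized cost $O(1)$, because all nodes involved lie at depth less than $3$ in $T'$. By Lemma~\ref{lem:powerDelta}, the structural change in $T'$ between splays contributes $\Delta\Phi \le 11 + \log 11 = O(1)$ per rotation. Telescoping yields a total amortized cost of $O(m + R') = O(m + M + R)$ for the augmented sequence.

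Next, convert amortized cost to actual cost via $\sum c_i = \sum \hat c_i + \Phi(D_0) - \Phi(D_m)$. Since $\Phi(D_0) = 0$ and $\Phi(D_m) > -(n+2)$ by Lemma~\ref{lem:bound-phi}, the actual cost of the augmented sequence is $O(n + m + M + R)$, which collapses to $O(n + M + R)$ once the trivial ``root-only'' queries are noted to contribute nothing on either side, while every non-trivial query forces at least one move in $T'$. Finally, apply Conjecture~\ref{C:regular}: the original sequence of $m$ query-splays costs at most a constant times the augmented sequence, so its total cost is $O(n + M + R)$ as required.

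The main obstacle in carrying this plan out is ensuring that the organizing splays for successive rotations of $T'$ compose cleanly, so that the per-rotation $O(1)$ bound on $\Delta\Phi$ from Lemma~\ref{lem:powerDelta} really aggregates additively across all $R'$ rotations without the bookkeeping of one batch invalidating the depth/structure hypotheses of the next. A secondary subtlety is the careful absorption of the $m$ term into $n + M + R$; of course, the final appeal to Conjecture~\ref{C:regular} is the only non-rigorous step and is precisely what prevents this from being an unconditional optimality theorem.
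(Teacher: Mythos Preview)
Your proposal is correct and follows essentially the same route as the paper: restrict $T$ to $T'$ via Lemma~\ref{lem:simulateT}, interleave the $\le 3R'$ organizing splays of Lemma~\ref{lem:powerDelta} with the $m$ query splays, bound each amortized splay by Lemma~\ref{lem:amortizedS}, telescope, add the $-\Phi(D_m)<n+2$ term from Lemma~\ref{lem:bound-phi}, and finally invoke Conjecture~\ref{C:regular}. Your only real deviation is that you use the depth-$<3$ guarantee of $T'$ to bound every individual splay by $O(1)$ amortized, whereas the paper aggregates the depths through $M'$ to get a $(1+3)(4+6M')$ term; your version is slightly cleaner, and the ``secondary subtlety'' you flag about absorbing $m$ into $M+R$ is in fact left equally unresolved in the paper's own derivation, whose final line~(\ref{Eq:R9}) still carries an explicit $m\ell$ summand.
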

\begin{proof}
The proof is given by the following deduction:
\begin{eqnarray}
\sum_{1 \leq i \leq m} c_i - k (n+2) & \leq & k \left( \sum_{1\leq i \leq
    m+e} \ell + c'_i \right) - k (n+2)\label{Eq:R0}  \nonumber \\
 & = & k \left( (m+e) \ell - (n+2) + \sum_{1\leq i \leq
    m+e} c'_i \right) \label{Eq:R1}\\
& \leq & k \left((m+e) \ell + \Phi(D_m) + \sum_{1\leq i \leq m+e} c'_i \right) \label{Eq:R2} \\
& = & k \left((m+e) \ell + \Phi(D_m) - \Phi(D_0) + \sum_{1\leq i \leq m+e} c'_i \right) \label{Eq:R3} \\
& = & k \left((m+e) \ell + \sum_{1\leq i \leq m+e} \hat{c}'_i
\right) \label{Eq:R4} \\
& \leq & k \left[(m+e) \ell + (11 + \log 11) R' + (1 + 3) (4 + 6 M')
\right] \label{Eq:R5} \\
& \leq & k \left[(m+3R') \ell + (11+\log 11) R' + (1 + 3) (4 + 6 M')
\right] \label{Eq:R6} \\
& = & k \left[m \ell + 16 + (11 + 3 \ell + \log 11) R' + 24 M'
\right] \label{Eq:R7} \\
& = & k \left[m \ell + 16 + (11 + 3 \ell + \log 11) (2M + R) + 24(4M+3R) \right] \label{Eq:R8} \\
& = & k \left[m \ell + 16 + (118 + 6 \ell + 2 \log 11) M + (83 + 3 \ell
  + \log 11) R  \right] \label{Eq:R9}
\end{eqnarray}
The above deduction uses the restricted sequence over $T'$, instead
of the original optimal sequence over $T$. We point out which of the
results in the previous lemmas apply to $T'$. The values $R'$ and $M'$ 
refer to the number of rotations and cursor movements in $T'$.

We aim to bound the value on the left.
 The first inequality is our regular access conjecture. In here we are
 using explicit constants, $k$ and $\ell$, instead of the notation
 $O$, to determine the hidden factors. Equation~(\ref{Eq:R1})
 rearranges the sums. Equation~(\ref{Eq:R2}) follows from
 Lemma~\ref{lem:bound-phi}. In Equation~(\ref{Eq:R3}) we add the value
 $\Phi(D_0)=0$, because we are assuming that the initial structure of
 $S$ and $T'$ is the same.  Equation~(\ref{Eq:R4}) follows from the
 telescopic nature of the definition of amortized costs.

Inequality~(\ref{Eq:R5}) follows from Lemmas~\ref{lem:amortizedS}
and~\ref{lem:powerDelta}. From Lemma~\ref{lem:amortizedS} we conclude that the sum of the
$\hat{c}'_i$ that corresponds to splay operations that answer queries
is at most $(4 + 6 M')$, because $T'$ must also move its cursor to
those nodes and between any such nodes the cursor returns to the
root. Therefore for $T'$ to answer a query it most perform at least
$d_T(v)$ cursor movements, where $v$ is the node containing the key
that corresponds to the query.

To bound the sum of the $\hat{c}'_i$ that corresponds to the costs to
update $S$, due to rotations in $T'$, we use
Lemma~\ref{lem:powerDelta}. Note that we can apply this Lemma because
of the first property of restricted sequences. The fixed cost yields
the term $(11 + \log 11) R'$. The amortized cost of splaying, at most,
$3$ nodes of $S$ is bounded by the term $3(4 + 6 M')$. This bound
holds because the second property of
restricted sequences implies that $T'$ must perform at least $d_T(v)$
cursor movements before rotating node $v$. This
property states that after each rotation
the cursor of $T'$ returns to the root, see
Section~\ref{sec:restricting-t}.

In Equation~(\ref{Eq:R6}) we use the fact that $e \leq 3R'$, i.e., we
use at most $3$ extra splays for each rotation in
$T'$. Equation~(\ref{Eq:R7}) results by rearranging terms. In
Equation~(\ref{Eq:R8}) we use the bounds from
Lemma~\ref{lem:simulateT}. Equation~(\ref{Eq:R9}) obtains the final
bound, by rearranging terms.
\end{proof}

Let us now survey related work, so that we can discuss our result in
context.

\section{Related Work}
\label{sec:related-work}
\hfill

This section describes splay trees and some previous
results. These trees were proposed by Sleator and
Tarjan~\cite{Sleator:1985:SBS:3828.3835} in 1985. An extensive up
to date survey on dynamic optimally was given by
Iacono~\cite{Survey}.

There are several restricted optimally results for BSTs. Assuming that
the optimal BST is static than the best performance is the
entropy $O(\sum_{i=1}^m f_i \log(m/f_i))$, where $m$ is the
total number of queries and $f_i$ is the number of queries for the
$i$-th key. Knuth presented the first algorithm to determine such an
optimal tree~\cite{Knuth:1998:ACP:280635}. This dynamic programming
algorithm requires $O(n^2)$ time to determine the optimal tree. Kurt
Mehlhorn obtained a faster, $O(n)$ time algorithm, which approximates
the optimal tree~\cite{Mehlhorn}. In the context of this paper this
result would be referred to as statically optimal, because we are
hiding factors into the $O$ notation. Splay trees also obtain this
kind of optimality and moreover do not need to know the $f_i$
values. Static optimality becomes the best upper bound when the keys
in the query sequence are independent of each other.

We are only interested in approximating dynamic optimality, i.e.,
using $O$ notation and being a factor away from the optimal value,
because it is expected that the exact problem is
NP-Complete~\cite{BST_SODA2009}. Even in this case the only BST that
achieves the dynamic optimality bound, assumes free rotations and
takes exponential time to select
operations~\cite{DBLP:journals/algorithmica/BlumCK03}.

The good performance properties of splay trees are due to the access
Lemma, which was establish in the paper on splay
trees~\cite{Sleator:1985:SBS:3828.3835}. Here we repeat the original
argument and highlight the fact that it follows from Jenssen's
Inequality~\cite{Jenssen}.
\begin{lemma}[Amortized Access]
\label{lem:amortizedAccess}
Splaying a node $v$ to the root $t$ takes at most $1+3[r(t)-r(v)]$,
amortized time.
\end{lemma}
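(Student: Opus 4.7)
The plan is to bound the amortized cost of each individual splay step (zig, zig-zig, or zig-zag), and then sum these bounds telescopically. Let $r$ denote the rank before a step and $r'$ denote the rank after. I would use the potential $P(S) = \sum_v r(v)$ restricted to the splay tree (the $T$ part of $\Phi$ is not affected by a splay of $v$ in $S$, so it contributes nothing to the amortized cost of a splay step).

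First I would handle the zig-zig step, which is the only delicate case. Here $v$ moves up two levels past its parent $y$ and grandparent $z$. Only the ranks of $v$, $y$, $z$ change, and after the step we have $s'(z) \leq s(z) - w(v) + \text{some residual}$ while $s'(v) = s(z)$. The classical trick is to bound the actual cost of the zig-zig (which I take as $2$) plus the rank changes $r'(v) + r'(y) + r'(z) - r(v) - r(y) - r(z)$. Since $r'(v) = r(z)$, the bound to establish is that this expression is at most $3[r'(v) - r(v)]$. The key inequality is $r(v) + r'(z) \leq 2 r'(v) - 2$, which follows from concavity of $\log$ applied to $s(v)$ and $s'(z)$, whose sum is at most $s'(v)$. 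This is precisely where Jensen's inequality (or equivalently, the fact that $\log a + \log b \leq 2\log\bigl((a+b)/2\bigr)$) enters, as acknowledged in the excerpt.

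Next I would handle the zig-zag step by a similar but simpler argument: bound the amortized cost (actual cost $2$ plus rank change) by $3[r'(v) - r(v)]$, again using concavity of $\log$ applied to $s'(y)$ and $s'(z)$, whose sum is at most $s'(v)$. Finally, the zig step occurs at most once per splay (when $v$'s parent is the root), and its amortized cost is at most $1 + 3[r'(v) - r(v)]$; the extra additive $1$ absorbs the actual cost of the single rotation since we cannot save it through a quadratic rank improvement as in the other cases.

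With these three step-wise bounds, I would sum over all steps of the splay. The intermediate ranks of $v$ telescope, leaving $3[r_{\text{final}}(v) - r_{\text{initial}}(v)] + 1$, where $r_{\text{final}}(v) = r(t)$ since $v$ ends up at the root. This yields the bound $1 + 3[r(t) - r(v)]$. The main obstacle is the zig-zig case, specifically verifying the inequality $r(v) + r'(z) \leq 2r'(v) - 2$ via concavity of the logarithm; every other piece of the argument is routine bookkeeping on rank differences and telescoping.
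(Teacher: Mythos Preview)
Your proposal is correct and follows essentially the same approach as the paper: a case analysis over the three splay steps, the concavity (Jensen) inequality $r(v)+r'(z)\le 2r'(v)-2$ for the zig-zig case and the analogous one on $s'(y),s'(z)$ for the zig-zag case, and a telescoping sum with the single additive $1$ coming from the at-most-one zig step. The only cosmetic difference is that the paper records the slightly sharper bound $2[r'(v)-r(v)]$ for zig-zag before relaxing it to $3[r'(v)-r(v)]$.
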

\begin{proof}
  Recall Figures~\ref{F:zig},~\ref{F:zigzig} and~\ref{F:zigzag}.  When
  $v$ is at the root, the bound is trivial. Hence let us focus on the
  last operation that is used to access $v$, which contains $x$ in our
  figures.
\begin{description}
\item[Zig] In this case $x$ is a child of $y$, which is at the
  root. The amortized cost is computed as follows:
\begin{eqnarray}
\hat{c}_{Zig} &=& 1 + r'(x) + r'(y) - r(x) - r(y) \label{eq:z1}\\
 & \leq & 1 + r'(x) - r(x) \label{eq:z2} \\
 & \leq & 1 + 3[r'(x) - r(x)]
\end{eqnarray}
The only nodes that change rank are $x$ and $y$. The
Inequality~(\ref{eq:z1}) follows from the fact that $r'(y) \leq
r(y)$. On the other hand Inequality~(\ref{eq:z2}) is true because
$r(x) \leq r'(x)$.
\item[ZigZig] In this case $x$, $y$ and $z$ change rank
\begin{eqnarray}
\hat{c}_{ZigZig} & = & 2 + r'(x) + r'(y) + r'(z) - r(x) - r(y) - r(z) \label{eq:zz1}\\
& = & 2  + r'(y) + r'(z) - r(x) - r(y) \label{eq:zz2} \\
& \leq & 2  + r'(x) + r'(z) - 2 r(x) \label{eq:zz3} \\
& \leq & 3[r'(x)-r(x)] \label{eq:zz4} 
\end{eqnarray}
Equation~(\ref{eq:zz2}) follows from the fact that $r'(x) =
r(z)$. Inequality~(\ref{eq:zz3}) is true because $r'(y) \leq r'(x)$ and
$- r(y) \leq - r(x)$. Inequality~(\ref{eq:zz4}) reduces to proving that $
(r(x) + r'(z))/2 \leq r'(x) - 1$, this relation is known as Jensen's
inequality and it holds because the $\log$ function is concave. More
explicitly the relation is the following:
\begin{equation*}
\frac{\log(s(x)) + \log(s'(z))}{2} \leq \log\left(\frac{s(x)+s'(z)}{2}\right)
\end{equation*}
\item[ZigZag] In this case $x$, $y$ and $z$ change rank.
\begin{eqnarray}
\hat{c}_{ZigZag} & = & 2 + r'(x) + r'(y) + r'(z) - r(x) - r(y) - r(z) \label{eq:za1}\\
& = & 2  + r'(y) + r'(z) - r(x) - r(y) \label{eq:za2} \\
& \leq & 2  + r'(y) + r'(z) - 2 r(x) \label{eq:za3} \\
& \leq & 2[r'(x)-r(x)] \label{eq:za4} \\
& \leq & 3[r'(x)-r(x)] \label{eq:za5}
\end{eqnarray}
Equation~(\ref{eq:za2}) follows from the fact that $r'(x) =
r(z)$. Inequality~(\ref{eq:za3}) is true because $- r(y) \leq
-r(x)$. Inequality~(\ref{eq:za4}) is also Jensen's inequality, in this
case it reduces to $(r'(y)+ r'(z))/2 \leq r'(x) - 1$. The
Inequality~(\ref{eq:za5}) is true because $r(x) \leq r'(x)$.
\end{description}
Notice that in any access to $S$ there is at most $1$ Zig operation
and several ZigZig and ZigZag. This justifies why it is important to
omit the term $1$ in the bound for the ZigZig and ZigZag
operations. The overall bound is obtained by summing the bounds of the
respective operations. This sum telescopes to the expression in the
Lemma.
\end{proof}

A recent detailed study of this Lemma is
available~\cite{DBLP:conf/esa/ChalermsookG0MS15}. An important
consequence of the splay operation is that most of the nodes in the
splayed branch are moved upwards, i.e., their depth gets reduced,
essentially in half. A
detailed study on depth reduction was given by
Subramanian~\cite{DBLP:journals/jal/Subramanian96}.

Several good performance theorem follow from this Lemma:
\begin{description}
\item[Static Optimality,] obtaining the entropy bound we mentioned in
  the beginning of the Section. 
\item[Static Finger,] the performance of splay trees can
  be made to made dependent on the position of a given element of the
  key set $X$. Let $f$ be the position of a certain element $x$ in the
  ordered set $X$ and $|x_i-f|$ the distance between that element and
  the element $x_i$, the $i$-th element in the query sequence. Then
  processing a sequence of queries with a splay tree takes $O(n \log n
  + m + \sum_{i=1}^m\log(|x_i-f|+1))$. In fact splay trees have an even better
  better performance because the chosen element at position $f$ can be
  made
  dynamic~\cite{doi:10.1137/S0097539797326988,doi:10.1137/S009753979732699X},
  but this result requires a longer analysis.
\item[Working Set,] meaning that recently accessed are quicker to
  access again. Let $t(x)$ be the number of different items accessed
  before accessing $x$ since the last time $x$ was accessed, or since
  the beginning of the sequence if $x$ was never accessed before. The
  the total time to process a sequence of queries is $O(n \log n + m +
  \sum_{i=1}^m \log(t(i)+1))$. An important consequence of this bound
  is that if the keys are assigned arbitrarily (but consistently) to
  unordered data, splay trees behave as dynamically
  optimal~\cite{DBLP:journals/algorithmica/Iacono05}.
\item[Unified bound,] combining the best performance of the 3 previous
  bounds. Dedicated structures where designed to achieve an even
  better bound which uses the dynamic finger
  bound~\cite{DBLP:journals/tcs/BadoiuCDI07}, instead of the static
  version. Generalized versions of this bound, with tighter values for
  structured sequences where also
  proposed~\cite{DBLP:journals/corr/abs-1302-6914}.
\end{description}
Another important performance bound is the sequential access, or
scanning, bound, which states that accessing the $n$ keys sequentially
takes only $O(n)$ time. This was initially established by
Tarjan~\cite{TarjanSeq} with a factor of $9$, the current best factor
of 4.5~\cite{Elmasry2004459} was proven by Elmasry.

Besides dynamic optimality  Sleator and
Tarjan~\cite{Sleator:1985:SBS:3828.3835} proposed the traversal
conjecture, which remains unproven. It is similar to the sequential
access but the key order is taken to be the preorder of another
BST.

Other open conjectures on splay trees include the Deque conjecture,
which claims that splay trees can be used to implement a deque, with
$O(1)$ amortized time per
operation~\cite{dequeCom,DBLP:journals/corr/abs-0707-2160}. The split
conjecture claims that deleting all the nodes of a splay tree takes
$O(n)$ time, in whatever order~\cite{lucas1992competitiveness}.

Studying binary search trees from a geometric point of view has
yielded several important results~\cite{BST_SODA2009}. It presented an
online algorithm, which may yield dynamically optimal BSTs. This
algorithm is referred to as greedy, and it was originally proposed as
an offline
algorithm~\cite{lucas1988canonical,munro2000competitiveness}. The
first $O(\log n)$ performance bound, of this algorithm, was
established by Fox~\cite{DBLP:conf/wads/Fox11}.

The same authors proposed Tango trees which are $O(\log \log n)$
competitive to the optimal BST~\cite{Tango_SICOMP}. This was the first
structure to obtain a proven non-trivial competitive ratio. The worst
case performance of these trees was further
improved~\cite{DBLP:conf/wads/DerryberryS09} to $O(\log n)$ per
operation. The $O(\log \log n)$ ratio was also proved for the
chain-splay variation of splaying~\cite{Georgakopoulos200837}.

Another alternative to try and obtain, proven, dynamic optimality
consists in combining BSTs. In~\cite{ComboBST_ICALP2013} the authors
show that given any constant number of online BST algorithms (subject
to certain technical restrictions), there is an
online BST algorithm that performs asymptotically as well on any
sequence as the best input BST algorithm. 

Iacono~\cite{Survey} recently proposed another approach, the Weighted
Majority Algorithm. This approach is proven to yield a dynamically
optimal BST, provided any such data structure exists.

A considerable amount of research as also gone into the study of lower
bounds, i.e., formulating expressions that are smaller than $R+M$,
i.e., the amount of operations required by $T$. The first lower bounds
where established by
Wilber~\cite{DBLP:journals/siamcomp/Wilber89}. These bounds where
further improved by the geometric view of BSTs~\cite{BST_SODA2009}. In
this view accesses to nodes in a tree are drawn as points in the
plane. One coordinate stores the ordered keys and the other coordinate
represents time. We can then consider rectangles using these points as
corners. Two rectangles rectangles are independent if their corners
are outside their interception or in the boundary of the
interception. An access sequence in a BST must respect any independent
set of rectangles. Hence the maximum independent set of rectangles
yields a lower bound for the number of visited nodes in a BST.

The alternation lower and the funnel bounds can also be explained with
the geometric view of BSTs. Moreover this latter bound is also related
to the number of turns that it is necessary to move a key to root,
with a procedure proposed by Allen and
Munro~\cite{DBLP:journals/jacm/AllenM78}.

\section{Conclusions and Further Work} 
\label{sec:concl-furth-work}
\hfill

In this paper we studied the dynamic optimality conjecture of splay
trees, which was proposed by the authors of these
trees~\cite{Sleator:1985:SBS:3828.3835} and has stood for more than 30
years. During this time a vast amount of results and approaches have
been proposed for this problem and related conjectures. Note that a
proof of dynamic optimality would establish other conjectures. 

All this research has resulted in new data structures, several of
which where proven to be $(\log \log n)$ competitive. The results we
present are a step forward. We reduced dynamic
optimality to the regular access conjecture. This conjecture seems
very plausible. On the other hand, if it is false that would also be a
remarkable property.

In the future we plan to research this conjecture. An exhaustive
literature review may provide the necessary tools to solve it. We also
plan to investigate if our potential function can be applied to the
analysis of other dynamically optimal candidates, namely the greedy
algorithm, combining BSTs or the weighted majority algorithm.

\bibliographystyle{siam}
\bibliography{manuscript.bib}

\end{document}